\documentclass[10pt,journal]{IEEEtran}
\usepackage{amsmath}
\usepackage{epsfig}
\usepackage{graphicx}
\usepackage{amsmath}
\usepackage{amsfonts}
\usepackage{latexsym}
\usepackage{amssymb}
\usepackage{cite}
\usepackage{array}




\newcommand{\bs}{\mathbf{s}}
\newcommand{\bn}{\mathbf{n}}

\newcommand{\bx}{\mathbf{x}}
\newcommand{\bc}{\mathbf{c}}
\newcommand{\by}{\mathbf{y}}

\newcommand{\bO}{\mathbf{0}}

\newcommand{\bB}{\mathbf{B}}

\newcommand{\bH}{\mathbf{H}}
\newcommand{\bE}{\mathbf{E}}

\newcommand{\bW}{\mathbf{W}}

\newcommand{\bQ}{\mathbf{Q}}

\newcommand{\bC}{\mathbf{C}}

\newcommand{\bI}{\mathbf{I}}

\newcommand{\bh}{\mathbf{h}}

\newtheorem{theorem}{Theorem}
\setlength{\topskip}{0pt}

\begin{document}

\title{Efficient Soft-Input Soft-Output Detection of Dual-Layer MIMO Systems}
\author{Ahmad Gomaa and Louay M.A. Jalloul,~\IEEEmembership{Senior~Member,~IEEE}
\thanks{The authors are with the Mobile and Wireless Group, Broadcom Corporation, Sunnyvale, CA 94086 U.S.A.}}
\maketitle

\begin{abstract}
A dual-layer multiple-input multiple-output (MIMO) system with multi-level modulation is considered. A computationally efficient soft-input soft-output receiver based on the \textit{exact} max-log maximum a posteriori (max-log-MAP) principle is presented in the context of iterative detection and decoding. We show that the computational complexity of our exact max-log-MAP solution grows linearly with the constellation size and is also less than that of the best known methods of Turbo-LORD that only provide approximate solutions. Using decoder feedback to change the decision thresholds of the constellation symbols, we show that the exhaustive search operation boils down to a simple slicing operation.
\end{abstract}

\setlength{\textfloatsep}{10pt}
\setlength{\floatsep}{8pt}

\section{Introduction}\label{sec_Introduction}
Iterative detection and decoding (IDD) techniques have been widely used \cite{Tuchler02,Haykin02,SIOF2010,Mikami09} to improve the performance of multiple-input multiple-output (MIMO) systems. The detector utilizes the feedback from the decoder to enhance the accuracy of its output statistics. In \cite{Tuchler02,Haykin02}, the detector was designed as a linear minimum mean square error equalizer, accepting soft input from the channel decoder. The soft input was used to cancel the interference from other streams and to adapt the equalization (weight) vector by modifying the variance of the canceled streams. In \cite{SIOF2010}, the detector was designed as a decision feedback equalizer with successive cancellation at the symbol level before passing the log-likelihood ratios (LLRs) of the code bits to the decoder. In \cite{Mikami09}, IDD was used to mitigate the effect of inter-cell interference in orthogonal frequency division multiplexing (OFDM) systems. In \cite{TLORD_TWC,T-LORD,K-best}, a maximum a posteriori (MAP) approximating algorithm was proposed as an improvement over the layered orthogonal lattice detector (LORD) approach \cite{LORD-Alerton,LORD}. In \cite{Junti-IDD}, list detectors were proposed in addition to iterative channel estimation in OFDM systems. Other MAP approximation algorithms were proposed in \cite{MAP-approx-Choi,MAP-approx-Hassibi,Tellambura12,Hochwald03} where modified sphere detection techniques were used.

Dual-layer transmission schemes are widely used in current cellular systems where user equipments cannot easily support more than two antennas. The solution presented in this paper is an \textit{exact} solution of the max-log MAP detector for dual-layer systems and uses fewer metric computations than the approximate solution provided in \cite{TLORD_TWC,T-LORD}. To generate the LLRs for one layer, we use the a-priori LLRs generated by the turbo decoder for the other layer to modify its decision thresholds and then use the slicer as a simple search device.

The rest of the paper is organized as follows. The system model is described in Section \ref{sec_SysModel}, and the exact max-log MAP solution is derived in Section \ref{sec_exact_MaxLogMAP}. In Section \ref{sec_no_decesion_regions}, we prove that the a-priori probabilities can lead to constellation symbols with empty decision regions. In Section \ref{sec_algo}, we provide the complete algorithm and describe it in pseudo code. We analyze the algorithm computational complexity and compare its complexity with other algorithms in Section \ref{sec_complexity}, and the paper is concluded in Section \ref{sec_Conclusion}.

\textit{Notations}: Unless otherwise stated, lower case and upper case bold letters denote vectors and matrices, respectively, and $\bI_m$ denotes the identity matrix of size $m$. Furthermore, $\left|\;\right|$ and $\|\;\|$ denote the absolute value and the $l_2$-norm, respectively, while $(\,)^H$ denotes the complex conjugate transpose operation.

\section{System Model}\label{sec_SysModel}
We consider dual-layer transmission schemes, where two layers (streams) are transmitted over $N_t\geq 2$ antennas using the precoding matrix $\bW$ of size $N_t\times 2$. The receiver detects the transmitted streams using $N_r \geq 2$ receive antennas. The input-output relation is given by
\begin{equation}\label{eqn_I/P_O/P}
\by = \bar{\bH} \bW \bs +\bn \triangleq \bH\bs+\bn= \bh_1 s_1 + \bh_2 s_2 +\bn
\end{equation}
where $\by$, $\bs$, $\bn$ and $\bar{\bH}$ denote the $N_r\times 1$ received signal, $2\times 1$ transmitted symbols, $N_r \times 1$ background noise plus inter-cell interference, and $N_r \times N_t$ channel matrix, respectively. Furthermore, $\bh_i$ is the $i$-th column vector of the equivalent channel matrix $\bH=\bar{\bH}\bW$, and $s_i$ is the $i$-th transmitted symbol chosen from the $M$-QAM constellation $\chi$. The $i\text{-th}$ $M$-QAM symbol, $s_i$, represents $q=\text{log}_2(M)$ code bits $\bc_i=\begin{bmatrix}c_{i1}&c_{i2}&\ldots& c_{iq}\end{bmatrix}$. The above model suits single-carrier systems over flat fading channels and OFDM systems over frequency-selective channels where the relation in \eqref{eqn_I/P_O/P} applies to every subcarrier. In IDD, the detector computes the LLRs of the code bits and passes them to the channel decoder, which computes the extrinsic LLRs and feeds them back to the detector. The detector uses the a priori LLRs computed by the decoder to generate more accurate LLRs for the channel decoder and so forth. Assuming known channel and zero-mean circularly symmetric complex Gaussian noise $\bn$ of covariance matrix $\bC_{\bn\bn}=\bQ^{-1}$, we write the log MAP a posteriori detector LLR of the bit $c_{1k}$ as follows:
\begin{align}
&L(c_{1k})=\text{log}\nonumber\\
&\left(\frac
{\sum\limits_{\bar{s}_1\in \chi_{k,1}} \hspace{-0.1cm}P_{\bar{s}_1}\sum\limits_{\bar{s}_2\in \chi} \text{exp}\left(-\|\by-\bh_1 s_1 -\bh_2 s_2 \|_{\bQ}^2\right)\hspace{-0.05cm}P_{\bar{s}_2}}
{\sum\limits_{\bar{s}_1\in \chi_{k,0}} \hspace{-0.1cm}P_{\bar{s}_1}\sum\limits_{\bar{s}_2\in \chi} \text{exp}\left(-\|\by-\bh_1 s_1 -\bh_2 s_2 \|_{\bQ}^2\right)\hspace{-0.05cm}P_{\bar{s}_2}}\right) \label{eqn_logmap_LLR}
\end{align}
where $\|\bx \|_\bQ^2\equiv\bx^H\bQ\bx$, $\chi_{k,1}$ and $\chi_{k,1}$ denote the constellation sets where the $k$-th bit are '1' and '0', respectively, and $P_{\bar{s}_1}$ and $P_{\bar{s}_2}$ denote the a priori probabilities that $s_1=\bar{s}_1$ and $s_2=\bar{s}_2$, respectively. The max-log MAP approximation of the LLR of $c_{1k}$ is given by:
\begin{align}
&L(c_{1k})\nonumber\\
&\hspace{-0.1cm}=\hspace{-0.1cm}\max\limits_{\bar{s}_1\in \chi_{k,1}} \hspace{-0.15cm}\left(\hspace{-0.03cm}\text{log} P_{\bar{s}_1}\hspace{-0.1cm}+\max\limits_{\bar{s}_2\in \chi} \left(\text{log} P_{\bar{s}_2}\hspace{-0.1cm}-\hspace{-0.1cm}\|\by\hspace{-0.05cm}-\hspace{-0.05cm}\bh_1 \bar{s}_1 \hspace{-0.05cm}-\hspace{-0.05cm}\bh_2 \bar{s}_2 \|_{\bQ}^2\right)\hspace{-0.1cm}\right)\nonumber\\
&\hspace{-0.1cm}-\hspace{-0.1cm}\max\limits_{\bar{s}_1\in \chi_{k,0}} \hspace{-0.15cm}\left(\hspace{-0.03cm}\text{log} P_{\bar{s}_1}\hspace{-0.1cm}+\max\limits_{\bar{s}_2\in \chi} \left(\text{log} P_{\bar{s}_2}\hspace{-0.1cm}-\hspace{-0.1cm}\|\by\hspace{-0.05cm}-\hspace{-0.05cm}\bh_1 \bar{s}_1 \hspace{-0.05cm}-\hspace{-0.05cm}\bh_2 \bar{s}_2 \|_{\bQ}^2\right)\hspace{-0.1cm}\right)\label{eqn_LLR_s1_MaxLog}
\end{align}
Similarly, the max-log MAP LLR of $c_{2k}$ is:
\begin{align}
&L(c_{2k})\nonumber\\
&=\hspace{-0.1cm}\max\limits_{\bar{s}_2\in \chi_{k,1}} \hspace{-0.1cm}\left(\hspace{-0.03cm}\text{log} P_{\bar{s}_2}\hspace{-0.1cm}+\max\limits_{\bar{s}_1\in \chi} \left(\text{log} P_{\bar{s}_1}\hspace{-0.1cm}-\hspace{-0.1cm}\|\by\hspace{-0.05cm}-\hspace{-0.05cm}\bh_1 \bar{s}_1 \hspace{-0.05cm}-\hspace{-0.05cm}\bh_2 \bar{s}_2 \|_{\bQ}^2\right)\hspace{-0.1cm}\right)\nonumber\\
&-\hspace{-0.1cm}\max\limits_{\bar{s}_2\in \chi_{k,0}} \hspace{-0.1cm}\left(\hspace{-0.03cm}\text{log} P_{\bar{s}_2}\hspace{-0.1cm}+\max\limits_{\bar{s}_1\in \chi} \left(\text{log} P_{\bar{s}_1}\hspace{-0.1cm}-\hspace{-0.1cm}\|\by\hspace{-0.05cm}-\hspace{-0.05cm}\bh_1 \bar{s}_1 \hspace{-0.05cm}-\hspace{-0.05cm}\bh_2 \bar{s}_2 \|_{\bQ}^2\right)\hspace{-0.1cm}\right)\label{eqn_LLR_s2_MaxLog}
\end{align}
The brute force solution of \eqref{eqn_LLR_s1_MaxLog} (and similarly \eqref{eqn_LLR_s2_MaxLog}) requires the computation of $M^2$ metrics where, for each instance of $\bar{s}_1$, the metric $\left(\text{log} P_{\bar{s}_2}\hspace{-0.1cm}-\hspace{-0.1cm}\|\by-\bh_1 \bar{s}_1 -\bh_2 \bar{s}_2 \|_{\bQ}^2\right)$ is computed for all instances of $\bar{s}_2$. However, we show in Section \ref{sec_exact_MaxLogMAP} how we obtain the \textit{exact} max-log MAP solution for the LLRs using fewer than $4M$ (rather than $M^2$) metrics computations.

\section{Exact Max-Log MAP solution}\label{sec_exact_MaxLogMAP}
The main strategy of our solution is to convert the $l_2$-norms in \eqref{eqn_LLR_s1_MaxLog} and \eqref{eqn_LLR_s2_MaxLog} into simple absolute values fitted for the slicing operations. Then, we exploit the a-priori LLRs to control the the thresholds of the slicers. We begin by whitening the noise to get $\tilde{\by}$=$\sqrt{\bQ}\by$ and $\tilde{\bh}_i$=$\sqrt{\bQ}\bh_i$. We then rewrite the bottleneck maximization problem $\max\limits_{\bar{s}_2\in \chi}\left(\text{log} P_{\bar{s}_2}\hspace{-0.1cm}-\hspace{-0.1cm}\|\by-\bh_1 \bar{s}_1 -\bh_2 \bar{s}_2 \|_{\bQ}^2\right)$ as follows:
\begin{align}
&\max\limits_{\bar{s}_2\in \chi}\left(\text{log} P_{\bar{s}_2}-\|\tilde{\by}-\tilde{\bh}_1 \bar{s}_1 -\tilde{\bh}_2 \bar{s}_2 \|^2\right)\nonumber\\
=&\max\limits_{\bar{s}_2\in \chi}\left(\text{log} P_{\bar{s}_2}-\|\tilde{\bh}_2\|^2\left\|\frac{\left(\tilde{\by}-\tilde{\bh}_1 \bar{s}_1\right)}{\|\tilde{\bh}_2\|}
 -\bB \begin{bmatrix}  \bar{s}_2\\\bO \end{bmatrix} \right\|^2\right)
\end{align}
where $\bO$ is all-zero vector of length $N_r-1$, $\bB=\begin{bmatrix} \frac{\tilde{\bh}_2}{\|\tilde{\bh}_2\|} & \bE \end{bmatrix}$ is an $N_r\times N_r$ unitary matrix as $\bE$ is an $N_r \times (N_r-1)$ matrix chosen such that $\bE^H\tilde{\bh}_2=0$ and $\bE^H\bE=\bI$. The reason we write the matrix $\bB$ in this form is to exploit its unitary structure and take it as a common factor out of the norm without affecting its value. This will lead to converting the $l_2$-norm into a single absolute value as follows. Since $\bB^H\bB=\bI_2$, we rewrite the maximization as follows:
\begin{align}
&\max\limits_{\bar{s}_2\in \chi}\left(\text{log} P_{\bar{s}_2}-\|\tilde{\bh}_2\|^2\left\|\frac{\bB^H\left(\tilde{\by}-\tilde{\bh}_1 \bar{s}_1\right)}{\|\tilde{\bh}_2\|}
 - \begin{bmatrix}  \bar{s}_2\\\bO \end{bmatrix} \right\|^2\right) \nonumber\\
 =&\max\limits_{\bar{s}_2\in \chi}\left(\frac{\text{log} P_{\bar{s}_2}}{\|\tilde{\bh}_2\|^2}-\left| Z_{(\bar{s}_1)}
 - \bar{s}_2 \right|^2\right)\label{eqn_max_over_s2}
\end{align}
where $Z_{(\bar{s}_1)}=\tilde{\bh}_2^H\left(\tilde{\by}-\tilde{\bh}_1 \bar{s}_1\right)/\|\tilde{\bh}_2\|^2$. If the a-priori probability term ($\text{log} P_{\bar{s}_2}/\|\tilde{\bh}_2\|^2$) were not there \cite{French_paper_ML} (i.e., ML instead of MAP), then the solution of the maximization in \eqref{eqn_max_over_s2} would be a simple slicer, and only $2M$ metrics (enumeration over $\bar{s}_1$ and $\bar{s}_2$ in \eqref{eqn_LLR_s1_MaxLog} and \eqref{eqn_LLR_s2_MaxLog}) were to be computed to obtain the LLRs of the code bits corresponding to $s_1$ and $s_2$. With the a-priori probability term, we obtain the \textit{exact} solution of \eqref{eqn_max_over_s2} with a reasonable increase in the number of metrics computations which is, interestingly, less than that of the approximate solution in \cite{TLORD_TWC,T-LORD}. In modern communications standards \cite{LTE-211}, the real and imaginary parts of $\bar{s}_2$ correspond to two orthogonal $L$-PAM constellations, $\psi$, where $L=\sqrt{M}$\footnote{Assuming square constellation, without loss of generality.}. In Fig. \ref{fig_constellation}, we show the $L$-PAM one-dimensional constellation corresponding to the real or imaginary of any complex QAM constellation. Hence, we rewrite \eqref{eqn_max_over_s2} as follows:
\begin{figure}
\centering
\epsfig{file=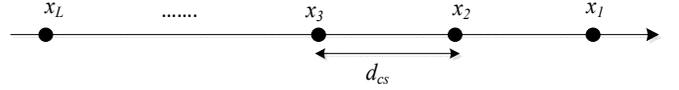,height=0.5in,width=3.45in}
\caption{$L$-PAM constellation for the real (or imaginary) part of the $M$-QAM constellation.} \label{fig_constellation}
\end{figure}
\begin{align}
&\max\limits_{\bar{s}_2\in \chi}\left(\frac{\text{log} P_{\bar{s}_2}}{\|\tilde{\bh}_2\|^2}-\left| Z_{(\bar{s}_1)}
 - \bar{s}_2 \right|^2\right)\nonumber\\
 &=\max\limits_{\bar{s}_{2,r}\in \psi}\left(\frac{\text{log} P_{\bar{s}_{2},r}}{\|\tilde{\bh}_2\|^2}-\left| Z_{\bar{s}_1,r}
 - \bar{s}_{2,r} \right|^2\right) \nonumber\\
 &+\max\limits_{\bar{s}_{2,I}\in \psi}\left(\frac{\text{log} P_{\bar{s}_{2},I}}{\|\tilde{\bh}_2\|^2}-\left| Z_{\bar{s}_1,I}
 - \bar{s}_{2,I} \right|^2\right)\label{eqn_max_s2_split}
\end{align}
where $Z_{\bar{s}_1,r}$ and $Z_{\bar{s}_1,I}$ denote the real and imaginary parts of $Z_{(\bar{s}_1)}$, respectively. Furthermore, $P_{\bar{s}_{2},r}$ and $P_{\bar{s}_{2},I}$ denote the a-priori probabilities that the real and imaginary parts of $s_2$ equal $\bar{s}_{2,r}$ and $\bar{s}_{2,I}$, computed using the a-priori LLRs of the bits corresponding to the real and imaginary parts, respectively.

Next, we use the a-priori probabilities (LLRs) to modify the decision regions of the $L$-PAM real and imaginary symbols; correspondingly apply the slicer to the real and imaginary parts of $Z_{(\bar{s}_1)}=\tilde{\bh}_2^H\left(\tilde{\by}-\tilde{\bh}_1 \bar{s}_1\right)/\|\tilde{\bh}_2\|^2$, respectively, to find the solution of \eqref{eqn_max_s2_split}; and then compute the metrics in \eqref{eqn_LLR_s1_MaxLog} and \eqref{eqn_LLR_s2_MaxLog}, which can be significantly simplified using \eqref{eqn_max_s2_split}. To develop the method of modifying the decision boundaries, we derive the decision region of the symbol $x_1$ in Fig. \ref{fig_constellation} by writing the conditions on $Z_{\bar{s}_1,r}$ such that
\begin{align}
\frac{\text{log} P_{x_1}}{\|\tilde{\bh}_2\|^2}-\left| Z_{\bar{s}_1,r}
 \hspace{-0.1cm}-\hspace{-0.1cm} x_1 \right|^2 > \frac{\text{log} P_{x_j}}{\|\tilde{\bh}_2\|^2}-\left| Z_{\bar{s}_1,r}
 \hspace{-0.1cm}-\hspace{-0.1cm} x_j \right|^2, \forall j\neq 1 \label{eqn_cond_x1}
\end{align}
Simplifying \eqref{eqn_cond_x1}, we get the decision region of $x_1$ as follows:
\begin{align}
Z_{\bar{s}_1,r} > \max\limits_{j>1} \left(\frac{x_1+x_j}{2}-\frac{\text{log} \left(P_{x_1}/P_{x_j}\right)}{2\left(x_1-x_j\right)\|\tilde{\bh}_2\|^2}\right)
\end{align}
Similarly, the decision region of $x_k$ is given by
\begin{align}\label{eqn_DecReg_xk}
\max\limits_{j>k} D_{kj} <  Z_{\bar{s}_1,r} < \min\limits_{j<k} D_{kj}, \qquad  1 < k < L
\end{align}
and the decision region of the last symbol $x_L$ is given by
\begin{gather}
Z_{\bar{s}_1,r} < \min\limits_{j<L} D_{Lj}\\
\text{where}\quad D_{kj}=D_{jk}=\frac{x_k+x_j}{2}-\frac{\text{log} \left(P_{x_k}/P_{x_j}\right)}{2\left(x_k-x_j\right)\|\tilde{\bh}_2\|^2}\label{eqn_Dkj}
\end{gather}
is called the \textit{probabilistic boundary} between the constellation symbols $x_j$ and $x_k$. Equation \eqref{eqn_Dkj} shows that the boundary between two neighboring symbols moves towards the symbol with the lower a-priori probability, tending to shrink its decision region while extending that of the symbol with the higher a-priori probability. Equation \eqref{eqn_Dkj} also shows that without a-priori LLRs (i.e., $P_{x_k}=\frac{1}{L},\;\forall k$), the boundaries between symbols return to their original values (the average of constellation symbols amplitudes).

\section{Symbols With Empty Decision Regions}\label{sec_no_decesion_regions}
We prove that the a-priori probability distribution can lead to constellation symbols with empty decision regions that will not be chosen by the slicer regardless of $Z_{\bar{s}_1,r}$ (or $Z_{\bar{s}_1,I}$).

\begin{theorem}
When computing the lower bound of the decision region for the constellation symbol $x_k$, given by $\max_{j>k} D_{kj}$, the following can occur:
\begin{equation}
j^*> k+1, \quad \text{where}\quad j^*\equiv arg \max\limits_{j>k} D_{kj}
\end{equation}
meaning that the lower bound of the symbol $x_k$ is not determined by its boundary with the adjacent symbol $x_{k+1}$, but determined instead by its boundary with a farther symbol $x_{j^*} < x_{k+1}$. In this case, all symbols lying between $x_k$ and $x_{j^*}$ (i.e., the constellation symbols $x_m$, where $k<m<j^*$) do not have decision regions and will not be chosen regardless of the decision statistic value.
\end{theorem}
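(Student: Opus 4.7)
The plan is to exploit the pairwise parabolic structure of the per-symbol metric $f_i(Z)\triangleq \log P_{x_i}/\|\tilde{\bh}_2\|^2 - |Z - x_i|^2$ appearing inside \eqref{eqn_max_s2_split}. Assume without loss of generality that the $L$-PAM constellation is indexed so that $x_1>x_2>\cdots>x_L$, as in Fig.~\ref{fig_constellation}. A direct expansion of $|Z-x_j|^2-|Z-x_i|^2$ and comparison with \eqref{eqn_Dkj} yields the clean difference identity
\[
f_i(Z)-f_j(Z) \;=\; 2(x_i-x_j)\,(Z-D_{ij}),
\]
so $f_i(Z)>f_j(Z)$ is equivalent to $Z>D_{ij}$ when $x_i>x_j$ and to $Z<D_{ij}$ when $x_i<x_j$. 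Consequently the decision region of any interior $x_m$ is exactly the interval $\max_{j>m}D_{mj} < Z < \min_{j<m}D_{mj}$, and establishing emptiness reduces to exhibiting some $j_1>m$ and $j_2<m$ with $D_{m,j_1}\geq D_{m,j_2}$.

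Under the hypothesis $j^*>k+1$, I would fix any $m$ with $k<m<j^*$ and take $j_1=j^*$, $j_2=k$. By the very definition of $j^*$ as the argmax over $j>k$, we have $D_{k,j^*}>D_{k,m}$. Evaluating the identity above with $(i,j)=(j^*,k)$ at $Z=D_{k,m}$ and using $x_k>x_{j^*}$, this translates to $f_{j^*}(D_{k,m}) > f_k(D_{k,m})$. But $f_k(D_{k,m})=f_m(D_{k,m})$ by the very definition of $D_{k,m}$, so $f_{j^*}(D_{k,m})>f_m(D_{k,m})$. Applying the identity once more with $(i,j)=(j^*,m)$ and using $x_m>x_{j^*}$, this strict inequality forces $D_{k,m}<D_{m,j^*}$.

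Combining, because $j^*>m$ and $k<m$, both $D_{m,j^*}$ and $D_{m,k}$ appear as terms in the appropriate extrema of the $D_{m,\cdot}$'s, so
\[
\max_{j>m}D_{mj}\;\geq\;D_{m,j^*}\;>\;D_{m,k}=D_{k,m}\;\geq\;\min_{j<m}D_{mj},
\]
and the interval defining the decision region of $x_m$ is empty. Applying this for every $m\in\{k+1,\ldots,j^*-1\}$ gives the full claim, and also explains why the slicer will never pick any such symbol regardless of $Z_{\bar{s}_1,r}$.

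The main obstacle is not computational but conceptual: one must notice that the single pairwise-boundary identity $f_i-f_j=2(x_i-x_j)(Z-D_{ij})$ can be used twice as a transport device, turning a strict ordering of boundaries anchored at $x_k$ into a strict ordering of boundaries anchored at $x_m$, by funneling both inequalities through a single evaluation at $Z=D_{k,m}$. Once this transport is spotted, no further estimates are needed and no assumption beyond the standard ordering of the PAM amplitudes is invoked.
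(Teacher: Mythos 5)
Your proof is correct, but it takes a genuinely different route from the paper's. The paper works directly with the closed-form boundary expression \eqref{eqn_Dkj}: it invokes the uniform spacing of the PAM constellation, writes $x_k-x_{j^*}=(j^*-k)\,d_{cs}$, introduces the integer offsets $f=m-k$ and $g=j^*-m$, converts the defining property $D_{kj^*}>D_{km}$ into the algebraic inequality \eqref{eqn_inequality_2}, and then substitutes that bound into an explicit expression for $D_{mk}-D_{mj^*}$ to show it is negative. You prove the same key implication (namely $D_{kj^*}>D_{km}\Rightarrow D_{mk}<D_{mj^*}$, which is exactly the paper's emptiness condition \eqref{eqn_cond_proof}) via the metric-difference identity $f_i(Z)-f_j(Z)=2(x_i-x_j)(Z-D_{ij})$, applied twice at the single point $Z=D_{km}$ where $f_k$ and $f_m$ intersect, so the ordering of boundaries anchored at $x_k$ is transported to an ordering of boundaries anchored at $x_m$. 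This buys two things: your argument never uses equal spacing, so it holds for arbitrary (even non-uniform) real constellations, and it eliminates the $f,g$ bookkeeping, making the geometric content transparent; the paper's computation, in exchange, stays entirely within the explicit $D_{kj}$ formula and quantifies the inequality in terms of $d_{cs}$ and $\|\tilde{\bh}_2\|^2$. One small shared caveat: like the paper, you use the strict inequality $D_{kj^*}>D_{km}$, which in the presence of argmax ties requires a tie-breaking convention (e.g., taking $j^*$ to be the largest maximizer); since the paper makes the identical assumption, this is not a gap in your proposal.
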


\begin{proof}
From \eqref{eqn_DecReg_xk}, the decision boundaries for $x_m$, where $k<m<j^*,$ are given by
\begin{align}
\max &\left( D_{m(m+1)},..,D_{mj^*},..,D_{mL} \right) < Z_{\bar{s}_1,r}\nonumber\\
&< \min \left( D_{m(m-1)},..,D_{mk},..,D_{m1} \right) \label{eqn_ineq_proof}
\end{align}
However, there is no value for $Z_{\bar{s}_1,r}$ that satisfies \eqref{eqn_ineq_proof} if
\begin{equation}\label{eqn_cond_proof}
D_{mk} < D_{mj^*}
\end{equation}
In the sequel, we prove that the condition in \eqref{eqn_cond_proof} is satisfied if  $j^*=arg \max\limits_{j>k} D_{kj}$, i.e., $D_{kj^*} > D_{km}$ and, hence,
\begin{gather}
x_{j^*} \hspace{-0.05cm}-\hspace{-0.05cm} x_m \hspace{-0.05cm}>\hspace{-0.05cm}
 \frac{\text{log} \left(P_{x_k}/P_{x_{j^*}}\right)}{2\left(x_k-x_{j^*}\right)\|\tilde{\bh}_2\|^2}
 - \frac{\text{log} \left(P_{x_k}/P_{x_m}\right)}{2\left(x_k-x_m\right)\|\tilde{\bh}_2\|^2} \\
 (m \hspace{-0.1cm}-\hspace{-0.1cm} j^*) \,d_{cs} > \frac{\text{log} \left(\left(P_{x_k}/P_{x_{j^*}}\right)^{m\hspace{-0.05cm}-k}\left(P_{x_m}/P_{x_k}\right)^{j^*\hspace{-0.1cm}-k}\right)}
 {(j^*-k)(m-k)d_{cs}\|\tilde{\bh}_2\|^2}\label{eqn_inequality_basic}
\end{gather}
where $x_k - x_{j^*} = (j^* - k)\,d_{cs}$ and $d_{cs}$ is the separation between adjacent real (or imaginary) constellation symbols as shown in Fig. \ref{fig_constellation}. Since $k<m<j^*$, we define
\begin{align}\label{eqn_definitions}
m = k + f,\qquad j^* = m + g = k + f + g
\end{align}
where $f,g\in\{0,\mathbb{Z}^+\}$. We rewrite \eqref{eqn_inequality_basic} as follows:
\begin{equation}\label{eqn_inequality_2}
fg(f + g)d_{cs}^2 \|\tilde{\bh}_2\|^2 \hspace{-0.05cm}<\hspace{-0.05cm} \text{log} \left(\left(P_{x_{j^*}}/P_{x_m}\right)^f\left(P_{x_k}/P_{x_m}\right)^g\right)
\end{equation}
Next, we rewrite the condition in \eqref{eqn_cond_proof} as follows:
\begin{align}
D_{mk} \hspace{-0.05cm}-\hspace{-0.05cm} D_{mj^*} \hspace{-0.05cm}=\hspace{-0.05cm} \frac{(f\hspace{-0.05cm}+\hspace{-0.05cm}g)\,d_{cs}}{2}-\hspace{-0.05cm}
\frac{\text{log}\hspace{-0.05cm} \left(\hspace{-0.05cm}\left(\frac{P_{x_k}}{P_{x_m}}\right)^g\left(\frac{P_{x_{j^*}}}{P_{x_m}}\right)^f\right)}
 {2fg\,d_{cs}\|\tilde{\bh}_2\|^2}
\end{align}
Using the inequality in \eqref{eqn_inequality_2}, we bound $D_{mk} \hspace{-0.05cm}-\hspace{-0.05cm} D_{mj^*}$ as follows:
\begin{align}
D_{mk} - D_{mj^*} &< \frac{(f+g)\,d_{cs}}{2} - \frac{fg(f+g)d_{cs}^2\|\tilde{\bh}_2\|^2}{2fg\,d_{cs}\|\tilde{\bh}_2\|^2}\nonumber\\
D_{mk} - D_{mj^*} &<0,\qquad D_{mk} < D_{mj^*}
\end{align}
which concludes the proof.
\end{proof}
The practical importance of this theorem is that it can reduce the algorithm complexity and further speed it up. For example, if the lower boundary of $x_1$ is determined by $x_4$ then we do not need to compute the decision boundaries of $x_2$ and $x_3$ because they will have empty decision regions.

\section{Algorithm and Computational Complexity}\label{sec_algo}
In the sequel, we summarize the algorithm and show the receiver model in Fig. \ref{fig_system_model}.
\\
{\bf Preprocessing}: Compute $\bH=\bar{\bH} \bW$ and whiten the noise by computing $\tilde{\by}=\sqrt{\bQ}\by$, $\tilde{\bh}_1=\sqrt{\bQ}\bh_1$, and $\tilde{\bh}_2=\sqrt{\bQ}\bh_2$.
\\
{\bf Procedure}:
\begin{enumerate}
\item\label{item_1} Get the decision regions for $Z_{\bar{s}_1,r}, Z_{\bar{s}_1,I}, Z_{\bar{s}_2,r},$ and $Z_{\bar{s}_2,I}$ using the corresponding a priori LLRs as follows:
\\
Initialize $k=1$.
\\
{\bf While} $k<=L$
\\
A) Compute the lower and upper thresholds of the $k$-th constellation symbol as $\max_{j>k} D_{kj}$ and $\min_{j<k} D_{kj}$, respectively, where
\begin{equation}\label{eqn_Dkj_LLRs}
D_{kj}=\frac{x_k+x_j}{2}-\frac{\sum_{n=1}^{q/2} \left((b_{n,k}-b_{n,j})L_a(c_{i\,n})\right)}{2\left(x_k-x_j\right)\|\tilde{\bh}_i\|^2}
\end{equation}
where $i\in\{1,2\}$, $L_a(c_{i\,n})$ denotes the a priori LLR of the code bit $c_{in}$, and $\left\{b_{n,k},b_{n,j}\right\}_{n=1}^{q/2}\in \left\{0,1\right\}$ are the bit vectors corresponding to the constellation symbols $x_k$ and $x_j$, respectively. The transition from the probability domain in \eqref{eqn_Dkj} to the LLR domain in \eqref{eqn_Dkj_LLRs} is straightforward.

B) {\bf If} $j^*> k+1, \quad \text{where}\quad j^*\equiv arg \max\limits_{j>k} D_{kj}$, set the decision regions of the symbols $x_m$, where $k<m<j^*$, to empty, and set $k=j^*.$
\\
\hspace*{0.34cm} {\bf Else}, set $k=k+1.$
\\
{\bf End While}
\item Enumeration step over constellation points of $s_1$ and $s_2$.
\\
{\bf For} $k=1:M$
\\
a) Compute the following quantities for $\bar{s}_1(k),\bar{s}_2(k) \in \chi$
\begin{equation}\nonumber
Z_{(\bar{s}_1\hspace{-0.02cm})}\hspace{-0.1cm}=\hspace{-0.1cm}\frac{\tilde{\bh}_2^H}{\|\tilde{\bh}_2\|^2}\left(\hspace{-0.05cm}\tilde{\by}\hspace{-0.1cm}-\hspace{-0.1cm}\tilde{\bh}_1 \bar{s}_1(k)\hspace{-0.05cm}\right)\hspace{-0.1cm},
Z_{(\bar{s}_2\hspace{-0.02cm})}\hspace{-0.1cm}=\hspace{-0.1cm}\frac{\tilde{\bh}_1^H}{\|\tilde{\bh}_1\|^2}\left(\hspace{-0.05cm}\tilde{\by}\hspace{-0.1cm}-\hspace{-0.1cm}\tilde{\bh}_2 \bar{s}_2(k)\hspace{-0.05cm}\right)
\end{equation}
b) Slice the real and imaginary parts of $Z_{(\bar{s}_1)}$ and $Z_{(\bar{s}_2)}$ using the thresholds obtained in Step \ref{item_1} to obtain $\bar{s}^*_2(k)$ and $\bar{s}^*_1(k)$, respectively.
\\
c) Compute the following metrics
\vspace{-0.2cm}
\begin{align}
\eta_1(k)&=\sum_{n=1}^q \left(\bar{b}_{1,n}(k)L_a(c_{1n})+\bar{b}^*_{2,n}(k)L_a(c_{2n})\right)\nonumber\\
&-\|\tilde{\by}-\tilde{\bh}_1 \bar{s}_1(k) -\tilde{\bh}_2 \bar{s}^*_2(k) \|^2\label{eqn_metric_eta1}\\
\eta_2(k)&=\sum_{n=1}^q \left(\bar{b}^*_{1,n}(k)L_a(c_{1n})+\bar{b}_{2,n}(k)L_a(c_{2n})\right)\nonumber\\
&-\|\tilde{\by}-\tilde{\bh}_1 \bar{s}^*_1(k) -\tilde{\bh}_2 \bar{s}_2(k) \|^2
\end{align}
where $\{\bar{b}_{1,n}(k),\bar{b}_{2,n}(k),\bar{b}^*_{1,n}(k),\hspace{-0.05cm}\bar{b}^*_{2,n}(k)\hspace{-0.05cm}\}_{n=1}^{q}$ are the bit vectors of $\bar{s}_1(k), \bar{s}_2(k), \bar{s}_1^*(k), \bar{s}^*_2(k)$, respectively.
\\
{\bf End For}
\item Compute the detector LLRs for $i=1,2$ and $0\leq n\leq q$
\begin{align}
L(c_{in})=\max_{k\,:\,\bar{b}_{i,n}(k)=1} \eta_i(k) - \max_{k\,:\,\bar{b}_{i,n}(k)=0} \eta_i(k)
\end{align}
\end{enumerate}

\begin{figure}
\centering
\epsfig{file=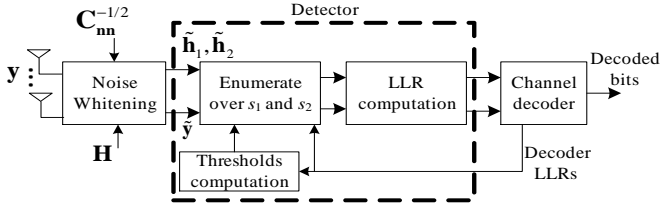,height=1.1in,width=3.45in}
\caption{Receiver model with our proposed approach.} \label{fig_system_model}
\end{figure}

\section{Complexity analysis}\label{sec_complexity}
We count the number of required metrics computations to obtain the $2\text{log}_2(M)$ detector LLRs corresponding to $s_1$ and $s_2$. To get the new decision regions, we need to compute the probabilistic boundaries between every two symbols of the $L$ symbols (for both real and imaginary parts). Since $D_{jk}=D_{kj}$, the number of metrics (boundaries) to be computed is
\begin{equation}
N_{metrics}^{s_1} = 2\frac{L(L-1)}{2} = L^2-L = M-\sqrt{M} < M
\end{equation}
Note that these boundaries are computed only once and are not included inside the enumeration over $s_1$ in \eqref{eqn_LLR_s1_MaxLog}. Hence, the total number of metric computations to obtain the LLRs of the code bits corresponding to $s_1$ is $M+N_{metrics}^{s_1}=2M-\sqrt{M}$. To obtain the $2q$ LLRs corresponding to $\bs$ (i.e., $s_1$ and $s_2$), the number of metric computations per tone becomes
\begin{equation}
N_{metrics}^{Total}= 4M-2\sqrt{M} < 4M
\end{equation}

In Table \ref{Table_cmplx_comp}, we compare our algorithm with the Turbo-LORD (T-LORD) \cite{TLORD_TWC} and the brute-force algorithms in terms of the number of metrics to be computed, number of real multiplications (Muls), and number of real additions (Adds) per tone per iteration as function of the constellation size and the number of receive antennas. In Table \ref{Table_cmplx_comp_256QAM}, we compare these algorithms for 256-QAM and two receive antennas where we observe the significant computational complexity saving without any performance loss since our algorithm obtains the exact solution rather than the approximate solution in \cite{TLORD_TWC}. In T-LORD \cite{TLORD_TWC}, while enumerating over $s_1$, three candidates for $s_2$ are obtained for every possible value of the $M$ candidiases of $s_1$. Hence, we have $3M$ candidates for the $\left(s_1,s_2\right)$ pair, and the metric in \eqref{eqn_metric_eta1} is computed for each candidate. Doing the same for $s_2$, we have another $3M$ metrics summing up to $6M$ metrics to be computed.


\begin{table}
\caption{Complexity comparison between various ML detectors}
\centering
\begin{tabular}{|p{1.5cm}|p{0.8cm}|p{2cm}|p{2.6cm}|}
\hline
Detector & Metrics & Real Muls & Real Adds \\ \hline
Proposed & $4M-2\sqrt{M}$ & $\left(16N_r+18\right)M-2\sqrt{M}$  &  $\left(12N_r+3q+18\right)M-\left(q+2\right)\sqrt{M}-4$  \\ \hline
T-LORD & $6M$   & $\left(16N_r+48\right)M $  &  $\left(12N_r+2q+52\right)M-4$ \\ \hline
Brute force & $M^2$ & $8M^2$ & $12M^2-4M$ \\ \hline
\end{tabular}
\label{Table_cmplx_comp}
\end{table}

\begin{table}
\caption{Complexity of various ML detectors for $M=$256 and $N_r=2$}
\centering
\begin{tabular}{|c|c|c|c|}
\hline
Detector & Metrics & Real Muls & Real Adds \\ \hline
Proposed & 992 & 12768  &  16732 \\ \hline
T-LORD & 1536   & 20480  &  23548 \\ \hline
Brute force & 65536 & 524288 & 785408 \\ \hline
\end{tabular}
\label{Table_cmplx_comp_256QAM}
\end{table}

\section{Conclusion}\label{sec_Conclusion}
We developed the \textit{exact} max-log MAP detector for IDD in dual-layer MIMO schemes with computational complexity less than $4M$. The idea is to use the a priori LLRs in modifying the decision thresholds of the constellation symbols. We also showed that the a priori LLRs can lead to constellation symbols with empty decision regions, reducing the search space of the slicing block. Comparing the computational complexity with the Turbo-LORD approximate solution and the exact brute force solutions, we show that our algorithm achieves significant complexity reduction while achieving the exact max-log MAP solution. We have numerically verified that our method yields the same performance as the brute force solution for various simulation parameters but the simulation results are not shown here due to space limitations.



\end{document}